\documentclass[12pt]{amsart}

\usepackage{amsmath,amssymb}

\begin{document}

\newtheorem{theorem}{Theorem}
\newtheorem{proposition}[theorem]{Proposition}
\newtheorem{corollary}[theorem]{Corollary}
\newtheorem{lemma}[theorem]{Lemma}
\newtheorem{definition}[theorem]{Definition}
\newtheorem{remark}[theorem]{Remark}
\newtheorem{conjecture}[theorem]{Conjecture}

\newcommand\Diff{\text{Diff}([0,1])}
\newcommand\R{\mathbb R}
\newcommand\D{\mathcal D}
\newcommand\id{\text{id}}

\

\title
[Time warping with Hellinger elasticity]
{Time warping with Hellinger elasticity}
\author{Yuly Billig}
\address{School of Mathematics and Statistics, Carleton University, Ottawa, Canada}
\email{billig@math.carleton.ca}

\maketitle


\begin{abstract} We consider a matching problem for time series with values in an arbitrary metric space, with the stretching penalty given by the Hellinger kernel. To optimize this matching, we introduce the Elastic Time Warping algorithm with a cubic computational complexity. 
\end{abstract}

\section{Introduction}

Functional data analysis has a multitude of applications: speech recognition \cite{SC}, biomedical \cite{K}, motion analysis \cite{SJW}, economics \cite{MMQV}, DNA matching \cite{H}, to name a few.

In this paper, we discuss a new time warping algorithm for matching time series with a Hellinger penalty for stretching time. Fr\'echet metric gives a way to compute a distance between two curves $f, g$ with values in a metric space $(X, \rho)$, which is independent of a parametrization: 
\[\inf_{\alpha,\beta\in\D} \sup_{\tau \in [0,1]} \rho(f(\alpha(\tau)), g(\beta(\tau))).\]
Here $\D$ is the group of diffeomorphisms of $[0,1]$. In many situations, however, we care about a specific time parametrization, and we want to impose a penalty for changing the parametrization of the curve.

This can be done with the Skorohod metric \cite{S, B}:
\[\inf_{\alpha\in\D} \max \left(  \sup_{\tau \in [0,1]} |\alpha(\tau), \tau|, \sup_{\tau \in [0,1]} \rho(f(\alpha(\tau)), g(\tau)) \right).\]

An important method in functional data analysis is based on the square root velocity framework 
\cite{S, Br, SK}, where the principal object of analysis is the square root velocity function 
$f^\prime(t) / \sqrt{\left| f^\prime(t) \right|}$. In this framework, the matching is performed using
\[\sup_{\alpha, \beta \in \D} \int_0^1 
\frac{f^\prime(\alpha(\tau))}{\sqrt{\left| f^\prime(\alpha(\tau)) \right|}} 
\frac{g^\prime(\beta(\tau))}{\sqrt{\left| g^\prime(\beta(\tau)) \right|}} 
\sqrt{\alpha^\prime(\tau)} \sqrt{\beta^\prime(\tau)} \,d\tau.\]
A connection of this framework with the Hellinger distance is pointed out in \cite{SK}.
Observe that for a reparametrization $\alpha \in \D$, the derivative $\alpha^\prime$ has all the properties of the density function of a probability distribution: it is positive 
and its integral is equal to 1. This allows us to borrow the tools from probability theory and apply them to the group $\D$ of diffeomorphisms of the segment $[0,1]$. For two diffeomorphisms $\alpha, \beta \in \D$, we define the Hellinger similarity coefficient
\[C(\alpha, \beta) = \int_0^1 \sqrt{\alpha^\prime (t)} \sqrt{\beta^\prime (t)} \,dt,\]
which gives rise to the Hellinger distance
\[\theta(\alpha, \beta) = arccos \, C(\alpha, \beta).\]

This leads to the following metric on the space of functions (Proposition \ref{Hmetric}):
\[d(f, g) = \inf_{\alpha,\beta\in\D}  \left( \theta(\alpha, \beta) + \sup_{\tau \in [0,1]} \rho(f(\alpha(\tau)), g(\beta(\tau))) \right).\]
In this metric, the penalty for stretching is given by the Hellinger term.

Commonly used objective functions for dynamic time warping are reviewed in \cite{HMB}.

For many applications, such as DNA matching, we are interested in knowing how close the matching pieces are, rather than how different are the non-matching pieces. For this reason, we may be more interested in the similarity coefficient $K(f,g)$, which has values between $0$ and $1$, and is equal to $1$ whenever $f = g$. For clustering algorithms, it is important to know the nearest neighbours of a given point, and a similarity coefficient can provide this information, effectively replacing the metric.

Motivated by the square root velocity framework and formula for the Hellinger metric, we introduce the following similarity coefficient:
\[K (f, g) = \sup_{\alpha, \beta \in \D} \int_0^1 \exp\big(-\rho(f(\alpha(\tau)), g(\beta(\tau))\big) \sqrt{\alpha^\prime(\tau)} \sqrt{\beta^\prime(\tau)} \,d\tau.\]
Unlike the square root velocity framework, which is applicable to functions with values in a vector space, the above similarity coefficient may be used for functions with values in arbitrary metric spaces.

A time series may be viewed as a piecewise constant function. Then the above expression yields a similarity coefficient for the time series.
For the given time series $f$, $g$, we determine the optimal parametrizations, which maximize the above integral. 
This leads to an algorithm for computing this similarity coefficient. The Elastic Time Warping algorith which we describe, gives 
the best matching between two time series, taking into account the penalty for stretching. If the time series $f$, $g$ have lengths $n$ and $m$ respectively, the computational complexity of the Elastic Time Warping algorithm is $(n+m)nm$ and the memory requirement is $n m$.

The paper is organized as follows. In Section \ref{Hellinger}, we discuss the properties of the Hellinger metric on the group of diffeomorphisms. In Section \ref{distance}, we use the metric on diffeomorphisms to define a metric on the space of functions with the Hellinger penalty for stretching. Finally, in Section \ref{algorithm}, we introduce the Elastic Time Warping algorithm that optimizes matching between two time series, based on a similarity coefficient with the Hellinger kernel.

{\bf Acknowledgements.} The author gratefully acknowledges support with a Discovery grant from the Natural Sciences and Engineering Research Council of Canada.

\section{Hellinger metric on $\Diff$.}
\label{Hellinger}

Consider the group of orientation-preserving diffeomorphisms of the segment $[0,1]$:
\[\Diff = \left\{ \alpha \in C^1([0,1]) \, | \, \alpha(0) = 0, \alpha(1) = 1, \alpha^\prime(t) > 0 \ \forall t \in [0,1] \right\}.\]
Note that a diffeomorphism $\alpha \in \Diff$ has the properties of a cumulative distribution function, and its derivative
$\alpha^\prime$ may be viewed as a probability density function. Because of this, we can borrow elements of probability theory when we study this group.

For $\alpha, \beta \in \Diff$, the Hellinger similarity coefficient is defined as 
\[C(\alpha, \beta) = \int_0^1 \sqrt{\alpha^\prime (t)} \sqrt{\beta^\prime (t)} \,dt.\]
Since $\int_0^1 \alpha^\prime (t) dt = 1$, function $\sqrt{\alpha^\prime (t)}$ is a unit vector in Hilbert space $L^2([0,1])$.

There are three common variants of the Hellinger distance. First, one defines 
\[\theta(\alpha, \beta) = arccos \, C(\alpha, \beta).\]
This is the angle between $\sqrt{\alpha^\prime}$ and  $\sqrt{\beta^\prime}$ on the unit sphere in Hilbert space, which is a distance function, being the geodesic distance on the sphere. It follows from Cauchy-Schwarz inequality that $0 < C(\alpha, \beta) \leq 1$ and $C(\alpha, \beta) = 1$ if and only if $\alpha = \beta$. Thus $0 \leq \theta(\alpha, \beta) < \frac{\pi}{2}$.

The following result on properties of the distance function is elementary \cite{Co}:

\begin{lemma}
\label{distance}
Let $X$ be a metric space with distance function $\rho: X \times X \rightarrow S \subset \R_{\geq 0}$. Suppose function $F: \R_{\geq 0} \rightarrow \R_{\geq 0}$
satisfies $F(a+b) \leq F(a) + F(b)$ for all $a, b \in S$. Then $F \circ \rho$ is also a metric on $X$.
\end{lemma}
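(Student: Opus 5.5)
The plan is to check the three metric axioms for $d = F\circ\rho$ directly. The statement as written leaves out some hypotheses that the argument needs, and that the intended application (for instance $F = \arccos$-type transforms, or $F(x)=\min(x,1)$) clearly satisfies. So I will make them explicit and use them:
\begin{itemize}
\item[(i)] $F(0)=0$ and $F(x)>0$ for $x>0$, at least for $x\in S$;
\item[(ii)] $F$ is nondecreasing on $\R_{\geq 0}$.
\end{itemize}
Without (ii) the statement is false. Take $S=\{0,1,2\}$ (a three-point path metric) and an $F$ with $F(1)=1$, $F(2)=3$: it is subadditive on sums of elements of $S$ that stay in $S$, yet $F(2)>F(1)+F(1)$ violates the triangle inequality. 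The same example shows that subadditivity alone is not the issue. Read literally, the hypothesis $F(a+b)\le F(a)+F(b)$ for $a,b\in S$ already forbids this $F$ (take $a=b=1$), so the real gap is elsewhere: with $S=\{0,1,3\}$ one can still have $\rho(x,z)=3\le 1+1$ and $F(3)$ large. Hence (ii), or at least the relevant comparison $F(c)\le F(a+b)$ for $c\le a+b$, is genuinely needed.

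Symmetry is immediate, since $F(\rho(x,y)) = F(\rho(y,x))$. Identity of indiscernibles follows from (i): $F(\rho(x,y))=0$ iff $\rho(x,y)=0$ iff $x=y$. Nonnegativity holds because $F$ takes values in $\R_{\geq 0}$.

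The triangle inequality is the only real step, and it is a two-line chain. For $x,y,z\in X$ put $a=\rho(x,y)$, $b=\rho(y,z)\in S$ and $c=\rho(x,z)$. The triangle inequality for $\rho$ gives $c\le a+b$. Monotonicity (ii) then gives $F(c)\le F(a+b)$, and the subadditivity hypothesis on $S$ gives $F(a+b)\le F(a)+F(b)$. Chaining these,
\[F(\rho(x,z)) \le F(\rho(x,y)) + F(\rho(y,z)).\]

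The main obstacle is not technical. It is making sure the monotonicity and positivity conditions are actually available, since they are omitted from the statement. In the paper's intended uses they hold trivially. For example, $\theta=\arccos C$ is an increasing function of the chordal $L^2$ distance $\|\sqrt{\alpha'}-\sqrt{\beta'}\|$, with values in $[0,\sqrt2)$, and one checks concavity or subadditivity there directly. So I would state the lemma with (i) and (ii) added and then run the argument above.
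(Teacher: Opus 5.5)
Your triangle-inequality chain $F(\rho(x,z))\le F(\rho(x,y)+\rho(y,z))\le F(\rho(x,y))+F(\rho(y,z))$ is correct, and it is the standard argument. The paper gives no proof of its own; it only calls the result elementary and cites \cite{Co}. You are also right that the lemma as printed is false and needs hypotheses like your (i) and (ii). For instance, $F\equiv 1$ satisfies the stated subadditivity, yet $F(\rho(x,x))\neq 0$. So your proof of the lemma under (i) and (ii) is fine, and the problems lie only in your supporting remarks.

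First, your counterexample for the necessity of monotonicity is not a metric space: $\rho(x,y)=\rho(y,z)=1$, $\rho(x,z)=3$ violates $3\le 1+1$. A valid one takes three points with $\rho(x,y)=\rho(y,z)=2$ and $\rho(x,z)=1$, and sets $F(0)=0$, $F(1)=10$, and $F(x)=1$ for every other $x>0$. Then $F(a+b)\le F(a)+F(b)$ holds for all $a,b\in S=\{0,1,2\}$ and (i) holds, but $F(\rho(x,z))=10>2$. Strictly, (ii) is sufficient rather than necessary: any $F$ with $F(0)=0$ and $1\le F\le 2$ on $(0,\infty)$ preserves metrics without being monotone.

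Second, (ii) does not ``hold trivially'' in the paper's main application $F=\sin$, $S=[0,\pi/2)$. The sine is not nondecreasing on $\R_{\geq 0}$, and even your weaker comparison $F(c)\le F(a+b)$ fails for an equilateral triple with side near $\pi/2$. To use your version, apply it to $\tilde F(x)=\sin(\min(x,\pi/2))$, which agrees with $\sin$ on $S$ and is nondecreasing. Its subadditivity on $S$ needs an extra case, $a+b>\pi/2$. There $\sin b>\cos a$, so $\sin a+\sin b>\sin a+\cos a\ge 1=\tilde F(a+b)$.

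Third, $\theta$ cannot be obtained from the chordal distance through this lemma. We have $\theta=2\arcsin\bigl(\|\sqrt{\alpha'}-\sqrt{\beta'}\|/2\bigr)$, and $x\mapsto 2\arcsin(x/2)$ is convex, hence superadditive, so the direct check you propose fails. $\theta$ is a metric because it is the great-circle distance on the unit sphere of $L^2([0,1])$, as the paper says. The lemma goes the other way, from $\theta$ to the chordal distance $2\sin(\theta/2)$.
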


We note that sine function satisfies $\sin(a+b) \leq \sin(a) + \sin(b)$ for all $a, b \in [0, \pi/2]$. This follows immediately from the trigonometric identity for $\sin(a+b)$. By the above Lemma, 
\[S(\alpha, \beta) = \sin \theta(\alpha, \beta) = \sqrt{ 1 - C(\alpha, \beta)^2}\]
is also the distance function, i.e., satisfies the triangle inequality.

For the same reasons as above $\frac{1}{2} \theta(\alpha, \beta)$ and $\sin \frac{\theta(\alpha, \beta)}{2}$ are distance functions as well.
Applying trigonometric identity $(1 - \cos \theta)/2 = \sin^2 \frac{\theta}{2}$, we conclude that
\[H(\alpha, \beta) = \sqrt{ 1 - C(\alpha, \beta)}\]
is also a distance function on $\Diff$. 

An important property of the Hellinger similarity coefficient is its invariance under reparametrizations.

\begin{proposition}
\label{invariance}
For $\alpha, \beta, \gamma \in \Diff$
\[C(\alpha \circ \gamma, \beta \circ \gamma) = C(\alpha, \beta).\]
\end{proposition}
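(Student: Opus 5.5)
The plan is a direct computation: apply the chain rule under the square roots, then make the substitution $s=\gamma(t)$ in the integral defining $C$.

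\textbf{Step 1 (chain rule).} Since $\alpha,\gamma\in\Diff$ are $C^1$, we have $(\alpha\circ\gamma)^\prime(t)=\alpha^\prime(\gamma(t))\,\gamma^\prime(t)$, and similarly $(\beta\circ\gamma)^\prime(t)=\beta^\prime(\gamma(t))\,\gamma^\prime(t)$. All factors are strictly positive, so the square roots split:
\[\sqrt{(\alpha\circ\gamma)^\prime(t)}\,\sqrt{(\beta\circ\gamma)^\prime(t)} = \sqrt{\alpha^\prime(\gamma(t))}\,\sqrt{\beta^\prime(\gamma(t))}\;\gamma^\prime(t).\]
The key point is that the two factors $\sqrt{\gamma^\prime(t)}$ combine into exactly one full factor $\gamma^\prime(t)$. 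This is the Jacobian needed for the change of variables, and it is the reason the Hellinger coefficient, which is built from half-densities, is invariant.

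\textbf{Step 2 (substitution).} Integrate over $[0,1]$ and substitute $s=\gamma(t)$, so $ds=\gamma^\prime(t)\,dt$. Because $\gamma$ is an orientation-preserving $C^1$ diffeomorphism with $\gamma(0)=0$ and $\gamma(1)=1$, the limits remain $0$ and $1$. This gives
\[C(\alpha\circ\gamma,\beta\circ\gamma)=\int_0^1\sqrt{\alpha^\prime(s)}\sqrt{\beta^\prime(s)}\,ds=C(\alpha,\beta).\]

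Before this, one should note that $\alpha\circ\gamma$ and $\beta\circ\gamma$ lie in $\Diff$, so that the left-hand side is defined. This holds because composition preserves the endpoint conditions, $C^1$ regularity, and positivity of the derivative.

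There is no real obstacle here. The only point needing care is justifying the change of variables, and this is standard for a $C^1$ monotone bijection with a continuous integrand, since $\sqrt{\alpha^\prime}\sqrt{\beta^\prime}$ is continuous on $[0,1]$.
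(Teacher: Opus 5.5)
Your proof is correct and matches the paper's argument: the chain rule turns the integrand into $\sqrt{\alpha^\prime(\gamma(t))}\sqrt{\beta^\prime(\gamma(t))}\,\gamma^\prime(t)$, and the substitution $s=\gamma(t)$ finishes it. You also check that $\alpha\circ\gamma,\beta\circ\gamma\in\Diff$ and justify the change of variables, which the paper leaves implicit.
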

\begin{proof}
\begin{align*}
&\int_0^1 \sqrt{\frac{d}{dt} \alpha(\gamma(t))} \sqrt{\frac{d}{dt} \beta(\gamma(t))} \, dt \\
= &\int_0^1 \sqrt{\alpha^\prime(\gamma(t))} \sqrt{\beta^\prime(\gamma(t))} \gamma^\prime(t) \, dt \\
= &\int_0^1 \sqrt{\alpha^\prime(s)} \sqrt{\beta^\prime(s)} \, ds .
\end{align*}
\end{proof}

\begin{corollary}
Metrics $\theta(\alpha, \beta), S(\alpha, \beta)$, and $H(\alpha, \beta)$ on $\Diff$ are right-invariant. 
\end{corollary}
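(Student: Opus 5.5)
The corollary follows from Proposition \ref{invariance}, since each of the three metrics is a function of $C$ alone. By definition,
\[\theta(\alpha,\beta)=\arccos C(\alpha,\beta),\qquad S(\alpha,\beta)=\sqrt{1-C(\alpha,\beta)^2},\qquad H(\alpha,\beta)=\sqrt{1-C(\alpha,\beta)}.\]
Right-invariance of a metric $d$ on the group $\Diff$, with group operation composition, means
\[d(\alpha\circ\gamma,\beta\circ\gamma)=d(\alpha,\beta)\quad\text{for all }\alpha,\beta,\gamma\in\Diff.\]
I will verify this identity for each of the three metrics.

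First, I check that right translation stays inside the group. If $\alpha,\gamma\in\Diff$, then $\alpha\circ\gamma$ is $C^1$ and satisfies $(\alpha\circ\gamma)(0)=0$ and $(\alpha\circ\gamma)(1)=1$. Its derivative is $(\alpha\circ\gamma)'(t)=\alpha'(\gamma(t))\gamma'(t)>0$. So $\alpha\circ\gamma\in\Diff$, and the quantities $C$, $\theta$, $S$, $H$ are all defined at $(\alpha\circ\gamma,\beta\circ\gamma)$.

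Next, Proposition \ref{invariance} gives $C(\alpha\circ\gamma,\beta\circ\gamma)=C(\alpha,\beta)$. Apply to both sides the single-variable functions $\arccos$, $x\mapsto\sqrt{1-x^2}$ and $x\mapsto\sqrt{1-x}$. These are well defined on $(0,1]$, which contains the range of $C$. Equality of the arguments then gives equality of the values, so $\theta$, $S$ and $H$ are each unchanged under $(\alpha,\beta)\mapsto(\alpha\circ\gamma,\beta\circ\gamma)$.

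There is no real obstacle here: the analytic content, namely the change of variables $s=\gamma(t)$, was already carried out in the proof of Proposition \ref{invariance}. The only point needing care is the first step, that $\Diff$ is closed under composition, so that the right action makes sense.
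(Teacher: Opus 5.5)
Your proposal is correct and follows the paper's approach: the paper gives no separate proof and states the corollary right after Proposition \ref{invariance}, because $\theta$, $S$ and $H$ are functions of $C$ alone, so the invariance of $C$ under $(\alpha,\beta)\mapsto(\alpha\circ\gamma,\beta\circ\gamma)$ carries over directly. Your check that $\Diff$ is closed under composition is a harmless detail that the paper takes for granted.
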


\begin{corollary}
\label{hop}
Let $\alpha, \beta \in \Diff$. Then

(a) $C(\alpha, \beta) = C(\alpha \circ \beta^{-1}, \id)$,

(b) $C(\alpha, \id) = C(\alpha^{-1}, \id)$.
\end{corollary}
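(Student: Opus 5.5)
Both parts follow from Proposition \ref{invariance} by choosing the reparametrization $\gamma$ appropriately; the only points needing care are that inverses and compositions remain in $\Diff$, and that $C$ is symmetric in its arguments.

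For (a), I will apply Proposition \ref{invariance} with the triple $(\alpha\circ\beta^{-1}, \id, \beta)$ in place of $(\alpha, \beta, \gamma)$. Since $\beta \in \Diff$, the inverse function theorem gives $\beta^{-1} \in C^1$ with $(\beta^{-1})'(s) = 1/\beta'(\beta^{-1}(s)) > 0$, and $\beta^{-1}$ fixes $0$ and $1$. Hence $\beta^{-1}\in\Diff$, and so does the composition $\alpha\circ\beta^{-1}$ by the chain rule. The proposition then yields
\[C(\alpha\circ\beta^{-1}, \id) = C(\alpha\circ\beta^{-1}\circ\beta, \id\circ\beta) = C(\alpha,\beta),\]
which is (a).

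For (b), I will apply (a) with the roles chosen so that $\alpha^{-1}$ appears. Take the pair $(\id, \alpha)$ in place of $(\alpha, \beta)$ in (a):
\[C(\id, \alpha) = C(\id\circ\alpha^{-1}, \id) = C(\alpha^{-1}, \id).\]
It then remains to note that $C$ is symmetric, since the integrand $\sqrt{\alpha'}\sqrt{\beta'}$ is symmetric in $\alpha$ and $\beta$. Therefore $C(\alpha, \id) = C(\id, \alpha) = C(\alpha^{-1}, \id)$.

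There is no real obstacle here. The only step deserving a sentence is checking that $\Diff$ is closed under inversion and composition, so that Proposition \ref{invariance} legitimately applies; the inverse function theorem together with the positivity of $\alpha'$ on the compact interval settles this. Alternatively, one could verify (b) directly with the substitution $s=\alpha^{-1}(t)$, which turns $\int_0^1\sqrt{(\alpha^{-1})'(t)}\,dt$ into $\int_0^1 \sqrt{\alpha'(s)}\,ds$.
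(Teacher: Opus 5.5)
Your proof is correct and follows the paper's intended route: the corollary is stated without proof as a direct consequence of the right-invariance in Proposition \ref{invariance}. Part (a) comes from the reparametrization $\gamma=\beta$, and part (b) from (a) applied to the pair $(\id,\alpha)$ together with the symmetry of $C$. Your check that $\beta^{-1}$ and $\alpha\circ\beta^{-1}$ stay in the group is a small detail the paper leaves implicit.
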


\section{Distance on the space of functions with a Hellinger penalty for stretching}
\label{distance}

In this section, we consider distance on the space of functions with the Hellinger penalty for stretching.

Consider the space of bounded functions $B([0,1], X)$ defined on the segment $[0,1]$ with values in a metric space $(X, \rho)$. Let $\D = \Diff$. For $f, g \in B([0,1], X)$, the Fr\'echet distance is defined as 
\[\inf_{\alpha,\beta\in\D} \sup_{\tau \in [0,1]} \rho(f(\alpha(\tau)), g(\beta(\tau))).\]

In Fr\'echet distance, time may be freely stretched.
Let us define a new distance function on $B([0,1], X)$ by incorporating a penalty for stretching with a  Hellinger distance term. We denote by $D(\alpha, \beta)$ one of the distances $\theta, S$, or $H$, discussed in the previous section. Set
\begin{equation}
\label{metric}
d(f, g) = \inf_{\alpha,\beta\in\D}  \left( D(\alpha, \beta) + \sup_{\tau \in [0,1]} \rho(f(\alpha(\tau)), g(\beta(\tau))) \right).
\end{equation}

Proposition \ref{invariance} implies invariance of $d$ under reparametrizations:
\begin{lemma}
For $\gamma \in \D$ we have $d(f \circ \gamma, g \circ \gamma) = d(f, g)$.
\end{lemma}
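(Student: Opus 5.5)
The plan is to compare the two infima term by term, using the substitution $(\alpha,\beta)\mapsto(\gamma^{-1}\circ\alpha,\gamma^{-1}\circ\beta)$. This map is a bijection of $\D\times\D$ with inverse $(\alpha,\beta)\mapsto(\gamma\circ\alpha,\gamma\circ\beta)$. By definition (\ref{metric}),
\[d(f\circ\gamma, g\circ\gamma) = \inf_{\alpha,\beta\in\D}\left( D(\alpha,\beta) + \sup_{\tau\in[0,1]} \rho\big(f(\gamma(\alpha(\tau))), g(\gamma(\beta(\tau)))\big)\right).\]
Setting $\alpha_1=\gamma\circ\alpha$ and $\beta_1=\gamma\circ\beta$, the supremum term becomes exactly $\sup_\tau \rho(f(\alpha_1(\tau)), g(\beta_1(\tau)))$. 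Since the substitution is a bijection, the infimum over $(\alpha,\beta)$ equals the infimum over $(\alpha_1,\beta_1)$. It therefore remains to show $D(\alpha,\beta)=D(\gamma\circ\alpha,\gamma\circ\beta)$.

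Here lies the main subtlety. Proposition \ref{invariance} and its corollary give \emph{right}-invariance, $C(\alpha\circ\gamma,\beta\circ\gamma)=C(\alpha,\beta)$, whereas the substitution above composes with $\gamma$ on the \emph{left}. In general $C$ is not left-invariant. One can check this by taking $\beta=\id$: we would need $C(\gamma\circ\alpha,\gamma)=C(\alpha,\id)$, and by Corollary \ref{hop}(a) the left side is $C(\gamma\circ\alpha\circ\gamma^{-1},\id)$, which has no reason to equal $C(\alpha,\id)$.

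So the plan is to make the composition order match right-invariance. I will read $f\circ\gamma$ together with the pair $(\alpha,\beta)$ in the convention where the reparametrization acts as $f\circ\gamma\circ\alpha$ and the substitution is $\alpha=\gamma^{-1}\circ\alpha_1$. If that convention fails, I will instead use the reparametrization $\alpha=\alpha_1\circ\delta$, $\beta=\beta_1\circ\delta$ for an arbitrary $\delta\in\D$, which leaves the sup term unchanged because $\tau\mapsto\delta(\tau)$ is a bijection of $[0,1]$ and leaves $D$ unchanged by right-invariance. Combining this freedom with Corollary \ref{hop}, I expect $D(\alpha,\beta)$ to depend only on the relative map $\alpha\circ\beta^{-1}$.

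I expect the obstacle to be exactly this left/right mismatch. If the direct identity is not available, I would aim to prove $d(f\circ\gamma,g\circ\gamma)\le d(f,g)$ and then apply the same argument with $\gamma^{-1}$ to obtain the reverse inequality, which gives equality. Once $D$ is shown to be invariant under the substitution, the sup terms correspond exactly, and since each value on one side appears on the other, the infima coincide.
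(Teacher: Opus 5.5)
You have found the real obstruction, but none of your fallbacks removes it, and nothing can: the step $D(\alpha,\beta)=D(\gamma\circ\alpha,\gamma\circ\beta)$ is false, and so is the lemma. Since $(f\circ\gamma)\circ\alpha=f\circ(\gamma\circ\alpha)$ under any reading, the substitution necessarily puts $\gamma$ on the left, so changing conventions changes nothing. Right reparametrizations $(\alpha,\beta)\mapsto(\alpha\circ\delta,\beta\circ\delta)$ leave the objective unchanged for a \emph{fixed} pair $(f,g)$; this is what justifies taking $\beta=\id$. But they never relate $(f\circ\gamma,g\circ\gamma)$ to $(f,g)$. If you pass to $\phi=\alpha\circ\beta^{-1}$ via Corollary \ref{hop}(a), the substitution acts on $\phi$ by conjugation by $\gamma$, and $C(\cdot,\id)$ is not conjugation invariant. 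The plan of proving $\le$ and then applying it to $\gamma^{-1}$ also fails, because neither inequality holds in general.

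Here is a counterexample. Take $X=\R$, $0<a,b<1$, and let $f(s)=10$ for $s\ge a$ and $0$ otherwise, and $g(t)=10$ for $t\ge b$ and $0$ otherwise. The supremum term is $0$ if $\alpha^{-1}(a)=\beta^{-1}(b)$, equivalently if $\phi=\alpha\circ\beta^{-1}$ satisfies $\phi(b)=a$, and it is $10$ otherwise. Since $\theta<\pi/2<10$, we get $d(f,g)=\inf\{\theta(\phi,\id):\phi\in\D,\ \phi(b)=a\}$. Apply Cauchy--Schwarz on $[0,b]$ and $[b,1]$ as in Proposition \ref{linear}, and smooth the piecewise linear extremal; this gives
\[
d(f,g)=\arccos\left(\sqrt{ab}+\sqrt{(1-a)(1-b)}\right).
\]
For $(a,b)=(1/4,1/2)$ this equals $\pi/12$. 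Now choose $\gamma\in\D$ with $\gamma(1/10)=1/4$ and $\gamma(9/10)=1/2$. Then $f\circ\gamma$ and $g\circ\gamma$ are step functions of the same kind with jumps at $1/10$ and $9/10$, so $d(f\circ\gamma,g\circ\gamma)=\arccos(3/5)\neq\pi/12$. Because $S$ and $H$ are increasing functions of $\theta$, the example works for every admissible choice of $D$.

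The paper supports the lemma only by the remark that it follows from Proposition \ref{invariance}, and that remark has exactly the left/right defect you spotted. Right invariance of $C$ gives invariance of the objective under reparametrizing the common time $\tau$, not invariance of $d$ under $(f,g)\mapsto(f\circ\gamma,g\circ\gamma)$. Your analysis should therefore end in a counterexample, not a proof.
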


Applying Corollary \ref{hop}(a), we see that we can define $d(f,g)$ in a slightly simpler, but less symmetric way:
\[d(f, g) = \inf_{\alpha\in\D}  \left( D(\alpha, \id) + \sup_{\tau \in [0,1]} \rho(f(\alpha(\tau)), g(\tau)) \right).\]

Compare this with the definition of Skorohod metric(s) \cite{B}:
\[\inf_{\alpha\in\D} \max \left(  \sup_{\tau \in [0,1]} |\alpha(\tau), \tau|, \sup_{\tau \in [0,1]} \rho(f(\alpha(\tau)), g(\tau)) \right),\]
or its variant
\[\inf_{\alpha\in\D} \left(  \sup_{\tau \in [0,1]} |\alpha(\tau), \tau| + \sup_{\tau \in [0,1]} \rho(f(\alpha(\tau)), g(\tau)) \right).\]

\begin{proposition}
\label{Hmetric}
Formula (\ref{metric}) defines a metric on $B([0,1], X)$.
\end{proposition}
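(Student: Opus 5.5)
The plan is to verify the metric axioms one at a time, working throughout with the one-sided form established just before the statement:
\[d(f, g) = \inf_{\alpha\in\D}  \left( D(\alpha, \id) + \sup_{\tau \in [0,1]} \rho(f(\alpha(\tau)), g(\tau)) \right),\]
where $D$ is one of $\theta$, $S$, $H$, each a right-invariant metric on $\D$ by the Corollary to Proposition \ref{invariance}.

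\emph{Finiteness, non-negativity and $d(f,f)=0$.} Since $f,g$ are bounded and $D<\pi/2$, the infimum is finite. It is clearly non-negative, and choosing $\alpha=\id$ gives $d(f,f)=0$.

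\emph{Symmetry.} This is immediate from the symmetric form (\ref{metric}): swap the roles of $\alpha$ and $\beta$, and use $D(\alpha,\beta)=D(\beta,\alpha)$ and the symmetry of $\rho$.

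\emph{Triangle inequality.} Fix $\varepsilon>0$ and functions $f,g,h$. Choose $\alpha$ nearly optimal for the pair $(f,g)$ and $\gamma$ nearly optimal for $(g,h)$, each within $\varepsilon$ of the infimum. Use $\alpha\circ\gamma\in\D$ as the competitor for $(f,h)$.

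For the stretching term, the triangle inequality for $D$ and right invariance give
\[D(\alpha\circ\gamma,\id)\le D(\alpha\circ\gamma,\gamma)+D(\gamma,\id)=D(\alpha,\id)+D(\gamma,\id).\]

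For the matching term, for every $\tau$ we have
\[\rho(f(\alpha(\gamma(\tau))),h(\tau))\le \rho(f(\alpha(\gamma(\tau))),g(\gamma(\tau)))+\rho(g(\gamma(\tau)),h(\tau)).\]
The first supremum over $\tau$ equals $\sup_s\rho(f(\alpha(s)),g(s))$, because $\gamma$ is a bijection of $[0,1]$. Adding the two estimates gives $d(f,h)\le d(f,g)+d(g,h)+2\varepsilon$, and we let $\varepsilon\to 0$.

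\emph{Identity of indiscernibles.} Suppose $d(f,g)=0$. Then there are $\alpha_n\in\D$ with $D(\alpha_n,\id)\to 0$ and $\sup_\tau\rho(f(\alpha_n(\tau)),g(\tau))\to 0$.

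The first step is to convert Hellinger closeness into uniform closeness of $\alpha_n$ to $\id$. Since $|\alpha'-1|=|\sqrt{\alpha'}-1|\,(\sqrt{\alpha'}+1)$, Cauchy--Schwarz gives
\[\sup_t|\alpha_n(t)-t|\le\int_0^1|\alpha_n'-1|\le \|\sqrt{\alpha_n'}-1\|_{L^2}\,\|\sqrt{\alpha_n'}+1\|_{L^2}\le 2\sqrt2\,H(\alpha_n,\id)\to 0.\]
Hence $\alpha_n\to\id$ uniformly, and $f\circ\alpha_n\to g$ uniformly.

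The remaining task is to conclude $f=g$ pointwise, and this is the step I expect to be the main obstacle. For continuous $f$ it is immediate. For general bounded $f$, I would exploit that each $\alpha_n$ is a continuous increasing bijection, arguing as in the Skorohod setting. The idea is to compare values at a fixed $t_0$ through $\tau_n=\alpha_n^{-1}(t_0)\to t_0$, which gives $\rho(f(t_0),g(\tau_n))\to 0$. Symmetrically, $\rho(g(t_0),f(\alpha_n(t_0)))\to 0$.

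From these two limits I would try to derive $f(t_0)=g(t_0)$, checking along the way that jump-type counterexamples such as $\mathbf 1_{[0,1/2)}$ versus $\mathbf 1_{[0,1/2]}$ have $d>0$. If this pointwise argument fails for arbitrary bounded functions, the fallback is to state that (\ref{metric}) is a pseudometric, and a metric on the quotient by $d=0$ (or, for instance, on continuous or càdlàg functions).
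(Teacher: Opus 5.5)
Your symmetry and triangle-inequality arguments are correct. The paper proves the triangle inequality without composing. It keeps the middle function fixed: $d(f,g)$ is written with $f\circ\alpha_0$ against $g$, and $d(g,h)$ with $g$ against $h\circ\beta_0$. It then uses $D(\alpha_0,\id)+D(\id,\beta_0)\ge D(\alpha_0,\beta_0)$ together with the pointwise triangle inequality through $g(\tau)$. Your route via $\alpha\circ\gamma$ and right-invariance is equivalent.

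The gap is the identity of indiscernibles, and it cannot be closed. On all of $B([0,1],X)$ the implication $d(f,g)=0\Rightarrow f=g$ is false, so your plan of comparing $f(t_0)$ with $g(\alpha_n^{-1}(t_0))$ must fail. Here is a counterexample.
\begin{itemize}
\item Take disjoint countable dense sets $A,B\subset(0,1)$, say $A=\mathbb{Q}\cap(0,1)$ and $B=(\mathbb{Q}+\sqrt2)\cap(0,1)$, and put $f=\mathbf{1}_A$, $g=\mathbf{1}_B$.
\item Run a back-and-forth construction. Alternately match the first unmatched point of $B$ to a nearby unmatched point of $A$, and vice versa. Each time, add to the current map a bump supported away from the finitely many points already matched, with $C^1$-norm at most $\varepsilon 2^{-k}$. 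This is possible because, by density, the new partner can be taken arbitrarily close.
\item The limit is some $\alpha\in\D$ with $\alpha(B)=A$ and $\|\alpha'-1\|_\infty\le\varepsilon$. Hence $f\circ\alpha=g$ exactly and $C(\alpha,\id)\ge\sqrt{1-\varepsilon}$.
\end{itemize}
Letting $\varepsilon\to0$ gives $d(f,g)=0$, although $f\ne g$ at every point of $A\cup B$. Notice that at $t_0\in A$ both of your limits hold trivially, yet $f(t_0)=1\neq 0=g(t_0)$.

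The paper's proof simply declares ``$d(f,g)=0$ iff $f=g$'' immediate, so it shares this error. Your fallback is the right repair: $d$ is a pseudometric on $B([0,1],X)$, and a metric on continuous or c\`adl\`ag functions. For the latter, your estimate $\sup|\alpha_n-\id|\le 2\sqrt2\,H(\alpha_n,\id)$ covers all three choices of $D$, since $H\le S$ and $H\le\theta/\sqrt2$. It gives $\alpha_n\to\id$ and $f\circ\alpha_n\to g$ uniformly, so $f(t)=g(t)$ at every continuity point of $f$. Those points are co-countable, so right-continuity gives $f=g$ on $[0,1)$. At $t=1$, $\alpha_n(1)=1$ settles the endpoint.
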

\begin{proof}
It is immediate to see that $d(f, g)$ is symmetric: $d(f,g) = d(g,f)$. Also, $d(f,g) = 0$ if and only if $f = g$.
Let us prove the triangle inequality:
\[d(f,h) \leq d(f,g) + d(g,h).\]
By definition of the infimum, for every $\epsilon > 0$ there exist $\alpha_0, \beta_0 \in \D$, such that 
\begin{align*}
d(f,g) + d(g,h) > &\, D(\alpha_0, \id) + D(\id, \beta_0) \\
+ &\sup_{\tau \in [0,1]} \rho(f(\alpha_0(\tau)), g(\tau))
+ \sup_{\tau \in [0,1]} \rho(g(\tau), h(\beta_0(\tau))) - \epsilon \\
\geq & \, D(\alpha_0, \beta_0) + \sup_{\tau \in [0,1]} \rho(f(\alpha_0(\tau)), h(\beta_0(\tau))) - \epsilon \\
\geq & \, d(f,h) - \epsilon.
\end{align*}
Since $\epsilon$ is arbitrary, we get the desired inequality.
\end{proof}

In certain applications, such as DNA matching, we are interested in recording how close the matching pieces are, rather than how far apart are the non-matching pieces. For this reason, we may more be interested in computing the similarity coefficient, rather than the distance. 

A similarity coefficient is a function $C(f, g)$ with values $0 \leq C(f,g) \leq 1$ and $C(f, g) = 1$ whenever $f = g$. Intuitively, a similarity coefficient is a decreasing function of the distance.
 For clustering of data, we are often interested in the nearest neighbours of a given point, rather than the actual values of the distance. A similarity coefficient may be used for this purpose. 

Let $C([0,1], X)$ be a class of bounded functions on $[0,1]$ with values in a metric space $X$, such that $\rho(f(\tau), g(\tau))$ has a finite number of discontinuities. 
For $f, g \in C([0,1], X)$, we define the similarity coefficient
\[K (f, g) = \sup_{\alpha, \beta \in \D} \int_0^1 \exp\big(-\rho(f(\alpha(\tau)), g(\beta(\tau))\big) \sqrt{\alpha^\prime(\tau)} \sqrt{\beta^\prime(\tau)} \,d\tau.\]

Note that function $F(x) = 1 - e^{-x}$ satisfies the condition of Lemma \ref{distance}, so $1 - \exp(-|x-y|)$ is a metric on $\R$, with values close to $1$ for points that are far apart.

Another example of a metric with values between $0$ and $1$ is the Jaccard distance between finite sets. For such metrics, the triangle inequality has a limited value. Suppose text $B$ has 40\% common sentences with text $A$ and also with text $C$. This indicates a close connection between these pairs of texts. However, this does not imply any connection between texts $A$ and $C$.

\begin{lemma}
(a) $K(f, g) = K(g, f)$,

(b) $0 < K(f, g) \leq 1$,

(c) $K(f, g) = 1$ if and only if $f = g$ almost everywhere.
\end{lemma}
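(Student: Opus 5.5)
Part (a) is a one-line symmetry. Since $\rho$ is symmetric, the integrand for the pair $(\alpha,\beta)$ in $K(f,g)$ equals the integrand for the pair $(\beta,\alpha)$ in $K(g,f)$. Taking suprema over all pairs in $\D\times\D$ gives $K(f,g)=K(g,f)$.

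For (b), I would use two bounds on the integrand.

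\emph{Upper bound.} Since $\rho\geq 0$, we have $\exp(-\rho(\cdot,\cdot))\leq 1$. Hence every integral is at most $C(\alpha,\beta)$. By the Cauchy--Schwarz remark in Section \ref{Hellinger}, $C(\alpha,\beta)\leq 1$, so $K(f,g)\leq 1$.

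\emph{Positivity.} Take $\alpha=\beta=\id$. Because $f$ and $g$ are bounded, $\rho(f(\tau),g(\tau))\leq M$ for some $M$: fix any $x_0\in X$ and use the triangle inequality with the bounds on $\rho(f(\tau),x_0)$ and $\rho(g(\tau),x_0)$. Then the integral is at least $e^{-M}>0$, and $K(f,g)\geq e^{-M}>0$. Measurability of the integrand is given by the finite-discontinuity assumption defining $C([0,1],X)$, after composing with $\alpha$ and $\beta$.

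For (c), the \emph{if} direction: if $f=g$ almost everywhere, take $\alpha=\beta=\id$. The integrand equals $1$ almost everywhere, so the integral is $1$, and together with (b) this gives $K=1$.

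The \emph{only if} direction is the main obstacle, because the supremum need not be attained. Assume $K(f,g)=1$ and choose pairs $(\alpha_n,\beta_n)$ whose integrals $I_n$ tend to $1$. Write
\[1-I_n=\bigl(1-C(\alpha_n,\beta_n)\bigr)+\int_0^1\bigl(1-e^{-\rho(f(\alpha_n),g(\beta_n))}\bigr)\sqrt{\alpha_n'\beta_n'}\,d\tau .\]
Both terms are nonnegative, so each tends to $0$.

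By Corollary \ref{hop}(a) and Proposition \ref{invariance}, I reparametrize with $\gamma=\beta_n^{-1}$ to reduce to the case $\beta_n=\id$. This preserves the integral, by the same change of variables as in Proposition \ref{invariance}. Then $C(\alpha_n,\id)\to1$ means $\|\sqrt{\alpha_n'}-1\|_{L^2}\to 0$. Consequently $\alpha_n\to\id$ uniformly, since $|\alpha_n(t)-t|\leq\int|\alpha_n'-1|\leq\|\sqrt{\alpha_n'}-1\|_2\,\|\sqrt{\alpha_n'}+1\|_2$.

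The second term then gives
\[\int_0^1\bigl(1-e^{-\rho(f(\alpha_n(\tau)),g(\tau))}\bigr)\,d\tau\to 0 ,\]
after controlling the weight $\sqrt{\alpha_n'}$. I would handle the weight with Cauchy--Schwarz: the integrand is at most $1$, and $\sqrt{\alpha_n'}-1\to0$ in $L^2$.

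It remains to pass to the limit $f\circ\alpha_n\to f$. Here I would use the hypothesis that $\rho(f,g)$ has only finitely many discontinuities; I read it as applying to compositions as well, i.e.\ $f$ and $g$ are piecewise continuous. Away from the finitely many discontinuity points, $\rho(f(\alpha_n(\tau)),g(\tau))\to\rho(f(\tau),g(\tau))$ pointwise by uniform convergence $\alpha_n\to\id$. Bounded convergence then yields
\[\int_0^1\bigl(1-e^{-\rho(f(\tau),g(\tau))}\bigr)\,d\tau=0,\]
so $\rho(f,g)=0$ almost everywhere, i.e.\ $f=g$ almost everywhere. Making this pointwise limit rigorous is the delicate step, and it is where the regularity class $C([0,1],X)$ must be invoked carefully.
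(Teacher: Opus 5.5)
Your proof is correct. Parts (a) and (b) follow the paper's argument: the lower bound $e^{-L}$ comes from $\alpha=\beta=\id$, and the upper bound from $e^{-\rho}\le 1$ and $C(\alpha,\beta)\le 1$.

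For (c) your route is different and more careful. The paper settles (c) in one sentence: for $K(f,g)=1$ ``we must have $\alpha=\beta$ and $\rho(f(\tau),g(\tau))=0$ except for a finite number of points.'' This tacitly treats the supremum as attained by some pair $(\alpha,\beta)$, then reads off equality in Cauchy--Schwarz and vanishing of the mismatch. Attainment is never justified, and it can fail: $\D$ is not compact, and the optimal warpings of Section~\ref{algorithm} are piecewise linear with corners, so they typically lie outside $\D$.

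Your maximizing-sequence argument closes exactly this gap:
\begin{itemize}
\item you split $1-I_n$ into the Hellinger deficit and the mismatch term;
\item you normalize $\beta_n=\id$ by right invariance;
\item you get $\alpha_n\to\id$ uniformly from $\|\sqrt{\alpha_n'}-1\|_2^2=2-2C(\alpha_n,\id)$;
\item you remove the weight $\sqrt{\alpha_n'}$ by Cauchy--Schwarz, and conclude by bounded convergence.
\end{itemize}

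The step you call delicate is routine. At each continuity point $\tau$ of $f$, $\alpha_n(\tau)\to\tau$ gives
$|\rho(f(\alpha_n(\tau)),g(\tau))-\rho(f(\tau),g(\tau))|\le\rho(f(\alpha_n(\tau)),f(\tau))\to 0$.

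The trade-off is as follows. The paper's line, granted attainment, needs only the hypothesis on $\rho(f,g)$, since $\alpha=\beta$ reduces the integral to $\int_0^1 e^{-\rho(f(s),g(s))}\,ds=1$. Your route instead needs $f$ itself to be continuous off a finite set, for the pointwise limit. This is a natural reading of the ambiguously worded class $C([0,1],X)$ and holds for the intended piecewise constant time series. In exchange you get an actual proof of the ``only if'' direction, and you also write out the ``if'' direction, which the paper omits.
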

\begin{proof}
Part (a) is obvious.  Since $\rho(f(\tau), g(\tau))$ is bounded, say by constant $L > 0$, then taking $\alpha = \beta = \id$, we see that $K(f,g) \geq e^{-L} > 0$. On the other hand, $ \exp\big(-\rho(f(\alpha(\tau)), g(\beta(\tau))\big) \leq 1$, thus 
\[K(f,g) \leq \sup_{\alpha, \beta \in \D} \int_0^1 \sqrt{\alpha^\prime(\tau)} \sqrt{\beta^\prime(\tau)} \,d\tau \leq 1.\]
This proves part (b). For the equality $K(f,g) = 1$ to hold, we must have $\alpha = \beta$ and 
$\rho(f(\tau), g(\tau)) = 0$, except for a finite number of points. 
\end{proof}

Since the integral in the formula for $K(f,g)$ is invariant under reparametrizations: $(\alpha, \beta) \mapsto 
(\alpha \circ \gamma, \beta \circ \gamma)$, we can write the formula for $K(f,g)$ in a simpler, but less symmetric way:
\[K (f, g) = \sup_{\alpha \in \D} \int_0^1 \exp\big(-\rho(f(\alpha(\tau)), g(\tau)\big) \sqrt{\alpha^\prime(\tau)} \,d\tau.\]

More generally, given a similarity coefficient $C$ on set $X$, we can define the following similarity coefficient on the space of functions on $[0,1]$ with values in $X$:
\[K(f,g) = \sup_{\alpha, \beta \in \D} \int_0^1 C\big(f(\alpha(\tau)), g(\beta(\tau))\big) \sqrt{\alpha^\prime(\tau)} \sqrt{\beta^\prime(\tau)} \,d\tau.\]

\section{Elastic Time Warping Algorithm}
\label{algorithm}

A time-warping algorithm based on edit distance was introduced in \cite{M}. Various versions of the dynamic time warping algorithm are reviewed in \cite{DB}.

In this section, we introduce an algorithm for computing the similarity coefficient $K(f,g)$, discussed in the previous section.  We assume that we are given two time series, $\{ (f_i, s_i) \,|\, 0\leq i < n \}$ and 
$\{ (g_j, t_j) \,|\, 0\leq j < m\}$ with \[0 = s_0 < s_1 < \ldots < s_{n-1} < 1, \quad  0 = t_0 < t_1 < \ldots < t_{m-1} < 1,\]
and $f_i, g_j$ having values in set $X$ with a given similarity coefficient $C(x,y)$, $x,y \in X$.
We interpret these time series as piecewise constant functions on $[0,1]$, where $g(t) = g_j$ for $t_j \leq t < t_{j+1}$, and similarly for $f$. 

Fix $\alpha, \beta \in \D$. Let $\tau_i, \mu_j \in [0,1]$ be such that $\alpha(\tau_i) = s_i$, $\beta(\mu_j) = t_j$. Then the pair $(\alpha, \beta)$ defines an interlacing pattern on time series $\{ f_i \}$ and $\{ g_j \}$. Note that this interlacing pattern is invariant under reparametrizations  $(\alpha, \beta) \mapsto 
(\alpha \circ \gamma, \beta \circ \gamma)$. Moreover, any placement of points $\{ \tau_i \}, \{ \mu_j \}$ with the same interlacing pattern may be obtained using a reparametrization of $(\alpha, \beta)$. For this reason, we will be paying attention to the interlacing pattern, and then select the optimal $(\alpha, \beta)$ with the specific interlacing pattern.

Since it is sufficient to restrict to the case $\beta = \id$, we are going to do just that, so the points $\{ t_j \}$ will be fixed, and we will interlace points $\{ \tau_i \}$ between them.

In the Dynamic Time Warping algorithm, we are inductively matching $f_i$ with $g_j$, advancing by $1$ at each step, either index $i$, index $j$, or both. In our algorithm, we will be doing essentially the same, but we will also specify time locations for these values. 

We will impose the following natural restriction on the placement of $\{ \tau_i \}$. If for some pair of indices $i, j$ we have $\tau_i = t_j$ then either $\tau_{i+k} = t_{j+1}$ for some $k \geq 1$, and 
$t_j < \tau_{i+1} < \ldots < \tau_{i+k-1} < t_{j+1}$,  or $\tau_{i+1} = t_{j+k}$ for some $k \geq 1$.
The first scenario corresponds to matching $f_i, \ldots, f_{i+k-1}$ to $g_j$, the second case is matching 
$g_j, \ldots, g_{j+k-1}$ to $f_i$, and the case of $\tau_{i+1} = t_{j+1}$ corresponds to advancing both indices in DTW. Note that $\tau_0 = t_0 = 0$, to start the inductive process.

Let us prove three theoretical results about the optimal profile of the function $\alpha$, and about the optimal placement of $\{ \tau_i \}$.


\begin{proposition}
\label{linear}
Let $[a,b] = [\tau_i, \tau_{i+1}] \cap [t_j, t_{j+1}]$. 
Then the optimal parametrization $\alpha$ is a linear function on $(a,b)$.
\end{proposition}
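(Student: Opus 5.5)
With $\beta = \id$ we are maximizing $\int_0^1 C\big(f(\alpha(\tau)), g(\tau)\big)\sqrt{\alpha'(\tau)}\,d\tau$. On the interval $(a,b)$ the point $\tau$ lies in $[t_j,t_{j+1})$, so $g(\tau) = g_j$. Also $\alpha(\tau)$ stays in $[s_i, s_{i+1})$, because $\alpha$ is increasing and $\alpha(\tau_i) = s_i$, $\alpha(\tau_{i+1}) = s_{i+1}$, so $f(\alpha(\tau)) = f_i$. Hence the integrand on $(a,b)$ equals $c\sqrt{\alpha'(\tau)}$ with the constant $c = C(f_i, g_j) \geq 0$.

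The plan is a local variation argument. Fix the values $A = \alpha(a)$ and $B = \alpha(b)$, and change $\alpha$ only on $[a,b]$ while keeping it increasing with these endpoint values. This does not change the interlacing pattern, since $\alpha$ still maps $[a,b]$ into $[s_i, s_{i+1}]$. It also does not change the integral outside $[a,b]$. So the optimal $\alpha$ must maximize $c\int_a^b \sqrt{\alpha'(\tau)}\,d\tau$ subject to $\int_a^b \alpha'(\tau)\,d\tau = B - A$.

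By Cauchy--Schwarz (equivalently, Jensen's inequality for the concave square root),
\[\int_a^b \sqrt{\alpha'}\,d\tau \leq \sqrt{b-a}\,\Big(\int_a^b \alpha'\,d\tau\Big)^{1/2} = \sqrt{(b-a)(B-A)},\]
with equality if and only if $\alpha'$ is constant on $(a,b)$, that is, $\alpha$ is linear there. If $c > 0$ and $\alpha$ were not linear on $(a,b)$, replacing it there by the linear interpolation would strictly increase the objective, contradicting optimality. If $c = 0$, the segment contributes nothing, and we may take $\alpha$ linear there without loss.

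The main obstacle is regularity. The replaced function must be an element of $\D$, so the corners at $a$ and $b$ have to be smoothed into a $C^1$ function with positive derivative. I would handle this by noting that the supremum is unchanged when $\D$ is enlarged to piecewise $C^1$ increasing homeomorphisms: a corner can be smoothed on an arbitrarily small neighbourhood at arbitrarily small cost in the integral, because $\sqrt{\cdot}$ is continuous and the integrand is bounded. The statement "optimal $\alpha$ is linear on $(a,b)$" is then to be read in this closure, or as saying that the supremum is approached by $\alpha$ that are linear on $(a,b)$ away from the endpoints.
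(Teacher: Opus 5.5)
Your proposal is correct and follows the paper's proof. You fix $\alpha(a)$ and $\alpha(b)$, observe that the integrand on $(a,b)$ reduces to $C(f_i,g_j)\sqrt{\alpha'(\tau)}$, and apply Cauchy--Schwarz, with equality exactly when $\alpha'$ is constant. Your extra remarks on the degenerate case $C(f_i,g_j)=0$ and on regularity are valid refinements that the paper leaves implicit; in particular, the piecewise linear optimizer lies only in the closure of $\D$, so the supremum over $C^1$ diffeomorphisms is approached rather than attained.
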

\begin{proof}
Fix values $\alpha(a), \alpha(b)$ and freely vary $\alpha(\tau)$, $\tau\in [a,b]$ subject to these boundary constraints and keeping $\alpha^\prime(\tau) > 0$.  Note that 
$C(f(\alpha(\tau)), g(\tau))$ has constant value $C(f_i, g_j)$ on $(a,b)$, so we are just maximizing
\[\int_{a}^{b} \sqrt{\alpha^\prime(\tau)} \,d\tau.\]  Applying the Cauchy-Schwarz inequality, we get that 
\begin{equation*}
\left( \int_{a}^{b} \sqrt{\alpha^\prime(\tau)} \,d\tau \right)^2 \leq
\int_{a}^{b} \alpha^\prime(\tau) \,d\tau \int_{a}^{b} 1 \,d\tau 
= (b-a) (\alpha(b) - \alpha(a)),
\end{equation*}
and the equality holds when $\alpha^\prime (\tau)$ is a constant function on $(a, b)$.
\end{proof}

\begin{proposition}
\label{caseB}
Assume $\tau_i = t_j$ and $\tau_{i+1} = t_{j+k}$ for $k \geq 1$. Then for the optimal parametrization $\alpha$ with these constraints,
\begin{align*}
\int_{t_j}^{t_{j+k}}  C\big(f(\alpha(\tau)), g(\tau)\big) & \sqrt{\alpha^\prime(\tau)}  \,d\tau \\
= &\sqrt{(s_{i+1} - s_i) \sum_{\ell = 0}^{k-1} (t_{j+\ell+1} - t_{j+\ell}) C(f_i, g_{j+\ell})^2 }.
\end{align*}
\end{proposition}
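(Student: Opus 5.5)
On $[\tau_i,\tau_{i+1}]=[t_j,t_{j+k}]$ the function $f(\alpha(\tau))$ is constant and equal to $f_i$, because $\alpha$ maps this interval onto $[s_i,s_{i+1}]$. The interval splits into the $k$ subintervals $[t_{j+\ell},t_{j+\ell+1}]$, $\ell=0,\dots,k-1$, on each of which $g=g_{j+\ell}$. Write $c_\ell = C(f_i,g_{j+\ell})$, $\Delta_\ell = t_{j+\ell+1}-t_{j+\ell}$, and let $x_\ell = \alpha(t_{j+\ell+1})-\alpha(t_{j+\ell})>0$ be the $\alpha$-increments. These satisfy the single constraint $\sum_\ell x_\ell = s_{i+1}-s_i$. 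The plan is a two-stage optimization: first fix the $x_\ell$ and optimize inside each piece, then optimize over the $x_\ell$.

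\emph{Stage one.} With the endpoint values of $\alpha$ fixed on each subinterval, Proposition \ref{linear} (more precisely, its Cauchy--Schwarz argument) shows that the best choice is $\alpha$ linear there. The contribution of that piece is then $c_\ell\sqrt{\Delta_\ell x_\ell}$. So the integral in question is bounded above by
\[\Phi(x) = \sum_{\ell=0}^{k-1} c_\ell \sqrt{\Delta_\ell}\,\sqrt{x_\ell},\]
and this bound is attained by the piecewise linear $\alpha$. Strictly, such an $\alpha$ is not $C^1$, but it can be smoothed with arbitrarily small loss. This is the same convention already implicit in Proposition \ref{linear}, so I would only remark on it.

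\emph{Stage two.} I would maximize $\Phi$ subject to $\sum x_\ell = s_{i+1}-s_i$ by applying Cauchy--Schwarz once more:
\[\Phi(x) = \sum_\ell \left(c_\ell\sqrt{\Delta_\ell}\right)\sqrt{x_\ell} \le \Big(\sum_\ell c_\ell^2\Delta_\ell\Big)^{1/2}\Big(\sum_\ell x_\ell\Big)^{1/2} = \sqrt{(s_{i+1}-s_i)\sum_\ell \Delta_\ell c_\ell^2}.\]
This is exactly the claimed value. Equality holds when $\sqrt{x_\ell}$ is proportional to $c_\ell\sqrt{\Delta_\ell}$, that is,
\[x_\ell = (s_{i+1}-s_i)\,\frac{c_\ell^2\Delta_\ell}{\sum_p c_p^2\Delta_p}.\]
These values are admissible: they are positive because $c_\ell>0$ (a similarity coefficient of exponential type; if some $c_\ell=0$ one takes a limit instead), and they sum correctly. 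So the supremum is achieved and equals the stated expression.

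The computation itself is routine. The only real care needed is in justifying the reduction to the finite-dimensional problem, namely that optimizing $\alpha$ with the interlacing pattern fixed decouples into independent per-piece problems once the breakpoint values $\alpha(t_{j+\ell})$ are fixed. I regard this as the main, though mild, obstacle. It holds because the integrand on each piece depends only on $\alpha'$ there, and any choice of increasing breakpoint values is realizable by some $\alpha$ with the given constraints $\alpha(t_j)=s_i$ and $\alpha(t_{j+k})=s_{i+1}$.
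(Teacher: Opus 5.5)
Your proof is correct and is essentially the paper's argument: reduce to piecewise linear $\alpha$ via the Cauchy--Schwarz argument of Proposition \ref{linear}, then apply Cauchy--Schwarz once more to the resulting finite-dimensional problem. The only difference is cosmetic. You parametrize by the increments $x_\ell=\Delta_\ell k_\ell$ and use the standard inner product, whereas the paper uses the slopes $k_\ell$ and a $\Delta_\ell$-weighted inner product on $\R^k$. Your upper bound holds for every admissible $\alpha$, not only for an optimizer assumed to exist, and your remarks on $C^1$ smoothing and on $c_\ell=0$ are extra care the paper omits.
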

\begin{proof}
By Proposition \ref{linear}, optimal parametrization $\alpha$ is piecewise linear. Denote by $k_\ell$ the slope of $\alpha(\tau)$ on interval $(t_{j+\ell}, t_{j+\ell+1})$. We want to maximize
\begin{equation*}
\sum_{\ell = 0}^{k-1} (t_{j+\ell+1} - t_{j+\ell}) C(f_i, g_{j+\ell}) \sqrt{k_\ell}
\end{equation*}
subject to the constraint
\begin{equation*}
\sum_{\ell = 0}^{k-1} (t_{j+\ell+1} - t_{j+\ell}) k_\ell = \int_{\tau_i}^{\tau_{i+1}} \alpha^\prime (\tau) \,d \tau = s_{i+1} - s_i.
\end{equation*}
For this, consider a positive definite scalar product on $\R^k$:
\begin{equation*}
({\bf x}, {\bf y}) = \sum_{\ell = 0}^{k-1} (t_{j+\ell+1} - t_{j+\ell}) \,x_\ell \,y_\ell,
\end{equation*}
and apply the Cauchy-Schwarz inequality for this product:
\begin{align*}
&\left( \sum_{\ell = 0}^{k-1}  (t_{j+\ell+1} - t_{j+\ell}) C(f_i, g_{j+\ell}) \sqrt{k_\ell} \right)^2 \\
&\quad\quad \leq \left( \sum_{\ell = 0}^{k-1} (t_{j+\ell+1} - t_{j+\ell}) k_\ell  \right) 
\left( \sum_{\ell = 0}^{k-1} (t_{j+\ell+1} - t_{j+\ell})  C(f_i, g_{j+\ell})^2 \right),
\end{align*}
and the equality holds when the slopes $k_\ell$ are proportional to $C(f_i, g_{j+\ell})^2$. The claim of the Proposition follows.
\end{proof}

\begin{proposition}
\label{caseA}
Assume $\tau_i = t_j$ and  $\tau_{i+k} = t_{j+1}$ for $k \geq 1$, and 
$t_j < \tau_{i+1} < \ldots < \tau_{i+k-1} < t_{j+1}$ Then for the optimal parametrization $\alpha$ with these constraints,
\begin{align*}
\int_{t_j}^{t_{j+1}}  C\big(f(\alpha(\tau)), g(\tau)\big) & \sqrt{\alpha^\prime(\tau)}  \,d\tau \\
= &\sqrt{(t_{j+1} - t_j) \sum_{\ell = 0}^{k-1} (s_{i+\ell+1} - s_{i+\ell}) C(f_{i + \ell}, g_j)^2 }.
\end{align*}
The optimal placement of points $\tau_{i+1}$, \ldots, $\tau_{i+k-1}$ is such that the lengths of subintervals
$\tau_{i+\ell+1} - \tau_{i+\ell}$ are proportional to $C(f_{i + \ell}, g_j)^2$.
\end{proposition}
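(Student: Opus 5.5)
On $[t_j, t_{j+1}]$ the function $g$ is constant, equal to $g_j$. The points $\tau_{i+1}, \ldots, \tau_{i+k-1}$ cut this interval into $k$ subintervals $[\tau_{i+\ell}, \tau_{i+\ell+1}]$, $\ell = 0, \ldots, k-1$. On the $\ell$-th subinterval $\alpha$ runs from $s_{i+\ell}$ to $s_{i+\ell+1}$, so $f(\alpha(\tau)) = f_{i+\ell}$ there. Hence the integrand equals $C(f_{i+\ell}, g_j)\sqrt{\alpha'(\tau)}$ on that piece.

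By Proposition \ref{linear}, the optimal $\alpha$ is linear on each subinterval. Write $h_\ell = \tau_{i+\ell+1} - \tau_{i+\ell} > 0$ and $d_\ell = s_{i+\ell+1} - s_{i+\ell}$. The slope on the $\ell$-th piece is then $d_\ell/h_\ell$, so the $\ell$-th piece contributes
\[
h_\ell\, C(f_{i+\ell}, g_j)\sqrt{d_\ell/h_\ell} = C(f_{i+\ell}, g_j)\sqrt{d_\ell}\,\sqrt{h_\ell}.
\]
This is in fact the value of the Cauchy--Schwarz bound in the proof of Proposition \ref{linear}, so for any admissible $\alpha$ the piece contributes at most this amount. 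The problem therefore reduces to the following finite-dimensional one: maximize
\[
\sum_{\ell=0}^{k-1} C(f_{i+\ell}, g_j)\sqrt{d_\ell}\,\sqrt{h_\ell}
\]
over $h_\ell > 0$ subject to $\sum_\ell h_\ell = t_{j+1} - t_j$.

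This is the step that carries the content, and the Cauchy--Schwarz inequality in $\R^k$ settles it. Take the vectors $(C(f_{i+\ell}, g_j)\sqrt{d_\ell})_\ell$ and $(\sqrt{h_\ell})_\ell$. Then
\[
\Big(\sum_\ell C(f_{i+\ell}, g_j)\sqrt{d_\ell}\sqrt{h_\ell}\Big)^2 \le \Big(\sum_\ell h_\ell\Big)\Big(\sum_\ell d_\ell\, C(f_{i+\ell}, g_j)^2\Big) = (t_{j+1} - t_j)\sum_\ell d_\ell\, C(f_{i+\ell}, g_j)^2 .
\]
Taking square roots gives exactly the claimed value.

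Equality holds precisely when $\sqrt{h_\ell}$ is proportional to $C(f_{i+\ell}, g_j)\sqrt{d_\ell}$, that is, when $h_\ell \propto d_\ell\, C(f_{i+\ell}, g_j)^2$. Dividing by $d_\ell$ shows this is equivalent to $\alpha'$ on the $\ell$-th piece being proportional to $1/C(f_{i+\ell}, g_j)^2$. It is the mirror image of Proposition \ref{caseB}: the roles of the $s$'s and $t$'s are swapped, and one can also see this by passing to $\alpha^{-1}$ via Corollary \ref{hop}(b).

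The proportionality constant is fixed by $\sum_\ell h_\ell = t_{j+1} - t_j$. Note that the resulting interval lengths are proportional to $d_\ell\, C(f_{i+\ell}, g_j)^2$, which reduces to the literal statement "proportional to $C(f_{i+\ell}, g_j)^2$" only when the $d_\ell$ are equal, or when time is measured in the $s$-scale. I would state the placement in the form $h_\ell \propto (s_{i+\ell+1} - s_{i+\ell})\, C(f_{i+\ell}, g_j)^2$ and remark on this.

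The main point requiring care is that the optimum is attained at an admissible configuration. Since every $C(f_{i+\ell}, g_j) > 0$, all $h_\ell$ are strictly positive, so the strict interlacing $t_j < \tau_{i+1} < \cdots < \tau_{i+k-1} < t_{j+1}$ holds. The resulting piecewise linear $\alpha$ can be smoothed to a diffeomorphism with arbitrarily small loss, so the supremum equals the displayed value.
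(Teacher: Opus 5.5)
Your proof is correct and is essentially the argument the paper intends when it omits the proof as analogous to Proposition \ref{caseB}. You reduce to piecewise linear $\alpha$ via Proposition \ref{linear}; noting that its Cauchy--Schwarz bound caps each piece for every admissible $\alpha$ makes this step clean. You then apply Cauchy--Schwarz in $\R^k$, with the interval lengths $h_\ell$ playing the role that the slopes $k_\ell$ play in Proposition \ref{caseB}.

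You are also right about the placement claim. The optimum has $\tau_{i+\ell+1}-\tau_{i+\ell}$ proportional to $(s_{i+\ell+1}-s_{i+\ell})\,C(f_{i+\ell},g_j)^2$, i.e.\ the slopes of $\alpha^{-1}$ are proportional to $C(f_{i+\ell},g_j)^2$, which is the true mirror of Proposition \ref{caseB}. So the paper's wording holds only when the gaps $s_{i+\ell+1}-s_{i+\ell}$ are equal, as the case $C\equiv 1$ already shows.
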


The proof of this Proposition is analogous to the proof of Proposition \ref{caseB}, and we omit it.

Let us now describe the Elastic Time Warping algorithm. We are given two time series $\{ (f_i, s_i) \,|\, 0 \leq i < n \}$ and 
$\{ (g_j, t_j) \,|\, 0 \leq j < m \}$ with $f_i, g_j \in X$ and $0 = s_0 < s_1 < \ldots < s_{n-1} < 1$, \ $0 = t_0 < t_1 < \ldots < t_{m-1} < 1$. We are also given the similarity coefficient $C(f_i, g_j)$. Set $s_n = t_m = 1$.
We turn the time series into piecewise constant functions $f(s), g(t)$ on $[0,1]$ and we would like to maximize
\begin{equation}
\label{integral}
\int_0^1 C(f(\alpha(\tau)), g(\tau)) \sqrt{\alpha^\prime(\tau)} \,d \tau.
\end{equation}  
We impose the following restriction on the class of parametrizations $\alpha \in \D$ that we consider:
if $\alpha(t_j) = s_i$ then either $\alpha(t_{j+1}) = s_{i+k}$, or $\alpha(t_{j+k}) = s_{i+1}$ for some $k \geq 1$. Note that $\alpha(t_0) = s_0 = 0$.  

The algorithm will compute quantities
\[V(i, j) = \int_{0}^{t_j} C(f(\alpha(\tau)), g(\tau)) \sqrt{\alpha^\prime(\tau)} \,d \tau,\]
where $\alpha$ is an optimal parametrization with a constraint $\alpha(t_j) = s_i$, which maximizes this integral.

For $k, p \geq 1$, define the following
\[F_k(i, j) = \sqrt{(t_j - t_{j-1}) \sum_{\ell = 0}^{k-1} (s_{i-\ell} - s_{i - \ell -1}) C(f_{i-\ell-1}, g_{j-1})^2},\]
\[G_p(i, j) = \sqrt{(s_i - s_{i-1}) \sum_{\ell = 0}^{p-1} (t_{j-\ell} - t_{j - \ell -1}) C(f_{i-1}, g_{j-\ell-1})^2}.\]
Note that $F_1(i,j) = G_1(i,j)$.
 
By Propositions \ref{caseB}, \ref{caseA}, we have the following recurrence relations:
\[V(i,j) = \max\limits_{k, p} \left\{ V(i-k,j-1) + F_k(i, j), \ V(i-1, j-p) + G_p(i, j) \right\},\]
where the maximum is taken over $1\leq k < i$ if $j > 1$ and $k = i$ if $j=1$; $2\leq p < j$ if $i > 1$, $p = j$ if $i=1$.
The basis of recurrence is $V(0,0) = 0$, and the maximum value of the integral (\ref{integral}) is given by $V(n, m)$.

Note that this algorithm is also applicable to the square root velocity framework.

Let us analyze the complexity of the algorithm. For a fixed $(i, j)$, the complexity of computing $F_k(i, j)$ and $G_p(i, j)$ for all $k$ and $p$ is $n + m$, taking advantage of a simple relation between $F_{k+1}(i, j)$ and $F_k(i,j)$, and similarly for $G$.

Thus, the complexity of the Elastic Time Warping algorithm is $n m (n+m)$.  The memory requirement of this algorithm is $n m$.

\end{document}